\newtheorem{lemma}{Lemma}
\newtheorem{proposition}{Proposition}
\newtheorem{assumption}{Assumption}
\newtheorem{definition}{Definition}
\newtheorem{remark}{Remark}
\newcommand{\interior}[1]{%
  {\kern0pt#1}^{\mathrm{o}}%
}
\title{\LARGE \bf
On A Notion of Stochastic Zeroing Barrier Function
}
\author{Tua A. Tamba$^{1}$, Bin Hu$^{2}$, and Yul Y. Nazaruddin$^{3}$
\thanks{$^{1}$Dept. Electrical  Engineering, Parahyangan Catholic University, Bandung 40141, Indonesia
        {\tt\small ttamba@unpar.ac.id}}%
\thanks{$^{2}$Dept. Engineering Technology, Old Dominion University, Norvolk, VA 23529, USA
        {\tt\small bhu@odu.edu}}%
        \thanks{$^{3}$Instrumentation \& Control Research Group, Institut Teknologi Bandung, Bandung 40132, Indonesia
        {\tt\small yul@tf.itb.ac.id}}%
}
\begin{document}

\maketitle
\thispagestyle{empty}
\pagestyle{empty}

\begin{abstract}
This note examines the safety verification of the solution of Ito's stochastic differential equations (SDE) using the notion of stochastic zeroing barrier function (SZBF).
It is shown that an extension of the recently developed zeroing barrier function concept in deterministic systems can be derived to provide an SZBF based  safety verification method for It\'o's SDE sample paths. 
The main tools in the proposed method include It\'o's calculus and stochastic invariance concept.
\end{abstract}


\section{Introduction}
The fast developments and advances in sensor, computational and communication technologies have recently stimulated a growing interests in cyber-physical systems (CPS) framework for control systems design and implementation purposes  \cite{lee2008cyber,kim2012cyber, tipsuwan2003control}.
In such a framework, a network of computers are often used to automatically manage the plant-sensor-controller-actuator interactions and data exchange via dedicated or suitable communication networks.
The resulting CPS thus often contains tight interactions between physical, computational and communication processes, resulting in complex dynamics that are often not well understood.
Among others, one of the frequently encountered challenges in CPS design are concerning the safety implementation of such CPS to guarantee the fulfillment of their safety-critical operational requirements.
While a large number of results have so far been established, issues related to safety verification of CPS remain open problems.

In order to examine and verify the safety property of CPS, researchers in systems and control areas have recently proposed a framework based on the use of the so-called \emph{barrier function} (BF). 
Being essentially a \emph{certificate-based} strategy akin with the well-known Lyapunov's stability analysis method, the BF approach verifies the safety property of CPS by finding some invariant set over which the considered safety property is fulfilled \cite{kong2013exponential, kong2014new, dai2017barrier}.
When compared to the more conventional simulation- or reachability-based methods, the BF based methods are arguably more computationally efficient as they do not require the exhaustive enumerations or simulations of the systems trajectories/paths for deciding the safety properties in questions. 
These have become the motivating reasons for the currently active research for further developments on both theoretical and application aspects of the BF methods in such areas as deterministic \cite{prajna2007framework}, hybrid \cite{prajna2004safety}, and stochastic systems \cite{prajna2004stochastic} (see also e.g. \cite{ames2019control} and the references therein).

With regard to  stochastic systems, the development of BF based safety verification method was first formulated  in \cite{prajna2004stochastic} as an exit problem.
More specifically, the  safety property is defined as the probability of the sample paths of It\'o SDE leaving a predefined safe set when initialized from a subset of that safe set.
The approach in  \cite{prajna2004stochastic} essentially search for a BF in the form of a \emph{supermartingale} of the process' sample path that can be used to upper bound such an exit probability.
The supermartingale requirement in \cite{prajna2004stochastic} for the BF  was later relaxed in \cite{steinhardt2012finite} whereby it is only required to be a $c-$martingale (cf. e.g. \cite{kushner1972stochastic}).
Akin to the idea for finite-time stability characterization developed in \cite{kushner1966finite}, such a relaxation is shown in \cite{steinhardt2012finite} to be useful for characterizing the finite-time regional safety properties of It\'o type SDE sample paths.

The aim of this note is to examine an extension of the recently developed  BF based method introduced in \cite{ames2016control} to the stochastic systems' case.
Specifically, we propose a stochastic analogue of the \emph{zeroing barrier function} (ZBF) based method developed in \cite{ames2016control} to allows for the safety verification of the solution of It\'o's SDE to be done using what we refer to in this paper as a stochastic ZBF (SZBF).
We show in this note that, similar to the development of stochastic stability methods based on Lyapunov's stability analysis in deterministic systems, the extension of ZBF based method in deterministic systems can be extended to derive a SZBF based method for safety verification of It\'o's SDE.

Section \ref{sec:setup}  formulates the problem setup and  the considered SDE model.
Section \ref{sec:result} presents the main result of the paper regarding SZBF-based invariance and stability property analyses of It\'o's SDE.
Section \ref{sec:conclude} concludes the note.

\vspace{5 pt}
\noindent {\bf Notation:}
${\mathbb R}$ and ${\mathbb R}^n$ denote the set of real numbers and the $n$-dimensional Euclidean space, respectively.
We use $x\in {\mathbb R}^n$ to denote a real-valued $n$-dimensional vector $x$ with an Euclidean norm $|x|$.
The expected value of a random variable and the probability of a random event to occur are denoted as ${\mathbb E}[\cdot]$ and ${\mathbb P}[\cdot]$, respectively.
The family of all continuous strictly increasing functions $\kappa:[0,a)\rightarrow [0,\infty)$ for some $a>0$ is denoted as class $\mathcal{K}$ function.
A continuous function $\alpha:(-b,a)\rightarrow(-\infty,\infty)$ is said to belong to the \emph{extended class} $\mathcal{K}$ function $\mathcal{K}_e$ if it is strictly increasing and satisfies $\alpha(0)=0$.
The family of all continuous functions  $\gamma:[0,b)\times[0,\infty)\rightarrow[0,\infty)$  is denoted as class $\mathcal{KL}$ function for some $b>0$ if for each fixed $s$, the mapping $\gamma(r,s)$ belongs to the class $\mathcal{K}$ function with respect to $r$ and for each fixed $r$, the mapping $\gamma(r,s)$ is decreasing with respect to $s$ and $\gamma(r,s)\rightarrow 0$ as $s\rightarrow\infty$.
$C^2_c$ denotes the family of all functions $V(x):{\mathbb R}^n\rightarrow{\mathbb R}$ that are continuously twice differentiable in $x$ with compact support.


\section{Setup \& Preliminaries}
\label{sec:setup}

\subsection{System Description}
\label{sec:desc}
Assume a complete probability space $\left(\Omega,\mathcal{F},\mathbb{P}\right)$ and consider a standard $\mathbb{R}^m-$valued Wiener process $w_t$ defined on this space.
Let $\left\{\mathcal{F}_t\right\}_{t\in\mathbb{R}_+}$ be the right continuous filtration generated by $w_t$, i.e. $\mathcal{F}_t:=\sigma(w_s;\; 0\leq s\leq t)\vee \mathcal{N}$ in which $\mathcal{N}$ denotes the class of all $\mathbb{P}-$negligible sets.
We consider a stochastic process $x_t$ which evolves on this space  according to the stochastic differential equation (SDE) of the form
\begin{equation}
\label{eq:sde}
dx_t = b(x_t)\,dt + \sum_{k=1}^m\sigma(x_t)\,dw_t^k
\end{equation}
with initial value $x_0$ at time $t=0$.
In \eqref{eq:sde}, both $b(\cdot)$ and $\sigma_k(\cdot)$ with $b(0)=0,\,\sigma_k(0)=0$ are functional mappings from $\mathbb{R}^n$ into  $\mathbb{R}^n$ which satisfy Assumption \ref{assume1} below.\\

\begin{assumption}
\label{assume1}
For $b(\cdot)$ and $\sigma_k(x)$ in \eqref{eq:sde}:
\begin{enumerate}
\item there exists a nonnegative constant $L$ such that for all $x\in\mathbb{R}^n$, the following holds.
\begin{equation}
\label{eq:asm1}
|b(x)|^2 + \sum_{k-1}^m|\sigma_k(x)|^2 \leq L\left(1+|x|^2\right)
\end{equation}
\item for any $x,y\in\mathbb{R}^n$, the following holds.
\begin{equation}
\label{eq:asm2}
|b(x)-b(y)| + \sum_{k=1}^m|\sigma_k(x)=\sigma_k(y)|\leq |x-y|\\
\end{equation}
\end{enumerate}
\end{assumption}
\vspace{5pt}

\noindent Under Assumption \ref{assume1}, a unique strong solution of the SDE \eqref{eq:sde} is known to exist in  It\^{o}'s sense  and is given by \cite{oksendal2013stochastic, khasminskii2011stochastic}
\begin{equation}
\label{eq:ito-sol}
x_t = x_0 + \int_0^t b(x_s)ds + \sum_{k=1}^m \int_0^t \sigma_k(x_s)dw_s^k
\end{equation}
where $x_0\in \mathbb{R}^n$ is given.
In what follows, for any $s\geq 0$ and $x\in\mathbb{R}^n$, we use $x^{s,x}_t$ to denote the solution of \eqref{eq:sde} of the form \eqref{eq:ito-sol} at time $s\leq t$ when initialized from $x$.

To the solution $x_t$ in \eqref{eq:ito-sol}, we associate the infinitesimal generator which is defined by an operator $\mathcal{L}$ acting on a function $h(x_t):\mathbb{R}^m\rightarrow\mathbb{R}^m$ of the form
\begin{equation}
\label{eq:gen0}
\mathcal{L}h(x_t) :=\lim_{t\searrow 0}\frac{\mathbb{E}[h(x_t)] - h(x)}{t}
\end{equation}
Here, we consider  $h(x)$ to be twice continuously differentiable on $x$ with compact support (denote this class of functions as $C^2_c$) such that \eqref{eq:gen0} becomes \cite{khasminskii2011stochastic}
\begin{equation}
\label{eq:gen1}
\mathcal{L}h(x) =\sum_{i=1}^nb^i(x)\frac{\partial h(x)}{\partial x^i} + \frac{1}{2}\sum_{i,j=1}^n\sum_{k=1}^m\sigma_k^i(x)\sigma_k^j(x)\frac{\partial^2h(x)}{\partial x^i\partial x^j}.
\end{equation}
Furthermore, for any function $h(x)\in C^2_c$ and the solution \eqref{eq:ito-sol} of the SDE \eqref{eq:sde}, It\'{o}'s lemma \cite{ito1944109} states that the following holds for $h(x_t)$.
\begin{equation}
\label{eq:ito-lemma}
dh(x_t) = \frac{\partial h}{\partial t}\,dt + \frac{\partial h}{\partial x}\,dx_t + \frac{1}{2} dx_t' \frac{\partial h(x)}{\partial x}\frac{\partial h'(x)}{\partial x}dx_t,
\end{equation}
where $dx_t$ is defined as in \eqref{eq:sde}, while $dt\,dt=0,\, dt\,dw_t^k = 0$ and $dw_t^{k^1}dw_t^{k^2}=\delta_{k^1k^2}\,dt$ for any $k^1,k^2\in k$ with $\delta$ being a dirac delta function, are used as a convention  for \eqref{eq:ito-lemma}.

\subsection{Problem Formulation}
\label{sec:prob}
Given the SDE in \eqref{eq:sde}, the objective of this paper is to examine the safety verification of the strong solution of \eqref{eq:sde}  defined in \eqref{eq:ito-sol}.
In particular, such a verification is done by characterizing the stochastic invariance property of a particular set with respect to (wrt.) the SDE \eqref{eq:sde}.
In this regard, we consider by construction the following closed set $\mathcal{C}$ (which is a subset of $\mathbb{R}^n$) defined by a function $h(x)$.
\begin{subequations}
\label{eq:invset}
\begin{align}
\mathcal{C}&=\left\{x\in\mathbb{R}^n:\, h(x)\geq 0\right\} \label{eq:invset1}\\
\interior{\mathcal{C}}&=\left\{x\in\mathbb{R}^n:\, h(x) > 0\right\} \label{eq:invset2} \\
\partial\mathcal{C}&=\left\{x\in\mathbb{R}^n:\, h(x) = 0\right\} \label{eq:invset3}
\end{align}
\end{subequations}
in which $\partial\mathcal{C}$ and $\interior{\mathcal{C}}$ denote the boundary and interior of $\mathcal{C}$, respectively.

Given the SDE in \eqref{eq:sde} and the closed set $\mathcal{C}$ in \eqref{eq:invset}, the main objective of this paper is to establish conditions on the defining function $h(x)$ that will guarantee the set $\mathcal{C}$ to be invariant under the evolution of the SDE \eqref{eq:sde}'s strong solution in \eqref{eq:ito-sol}.
Following the framework developed in \cite{ames2016control} for characterizing the invariant set of deterministic systems dynamics, our approach is based on the stochastic version of the notion of \emph{zeroing barrier function}.
To this end, we recall in Definition \ref{def:invset} below the corresponding notion of stochastic invariant set to be used in this paper.
\begin{definition}
\label{def:invset}
 A closed subset $\mathcal{C}\subset\mathbb{R}^n$ is said to be \emph{stochastically invariant} wrt. the SDE  \eqref{eq:sde} if for every $\mathcal{F}_0-$measurable random variable $x_0$ such that $x_0\in\mathcal{C}$ almost surely (a.s.), the strong solution $x_t$ in \eqref{eq:ito-sol} satisfies $x_t\in\mathcal{C}$ for all $t\geq 0$ a.s.
\end{definition}


\section{Main Results}
\label{sec:result}
This section proposes the notion of stochastic zeroing barrier function that can be used to establish the invariance of a closed set wrt. the strong solution of SDE \eqref{eq:sde} in \eqref{eq:ito-sol}.

\subsection{Stochastic Invariance Under SZBF Existence}
In this subsection, we show that the existence of a SZBF ensure the stochastic invariance of a certain subset of $\mathbb{R}^n$ wrt. the SDE in \eqref{eq:sde}.
To begin with, we first state the following notion of \emph{stochastic zeroing barrier function} (SZBF).

\begin{definition}[SZBF]
\label{def:szbf}
Consider the SDE \eqref{eq:sde} and the set $\mathcal{C}$ defined in \eqref{eq:invset}  by a function $h(x):\mathbb{R}^n\rightarrow\mathbb{R}$ with $h(x)\in C^2_c$.
If there exists a class $\mathcal{K}_e$ function $\alpha$ and a set $\mathcal{D}$ with $\mathcal{C}\subseteq\mathcal{D}\subset\mathbb{R}^n$ such that for all $x\in\mathcal{D}$ the  following hold: 
\begin{equation}
\label{eq:szbf-cond}
\text{(i) }\mathcal{L}h(x)\geq-\alpha(h(x)),\;\text{and}\; \text{(ii)}\sum_{k=1}^m\frac{\partial h(x)}{\partial x}\sigma_k(x)=0,
\end{equation}
then $h(x)$ is a SZBF.
\end{definition}

We now present the main result of this paper which essentially states that the existence of a SZBF $h(x)$ as per Definition \ref{def:szbf} implies the invariance of a closed set $\mathcal{C}$ defined in \eqref{eq:invset} wrt. the solution of the SDE \eqref{eq:sde} in \eqref{eq:ito-sol}.

\begin{proposition}
\label{prop1}
Consider the SDE \eqref{eq:sde} and a closed set $\mathcal{C}$ in \eqref{eq:invset} defined by some function $h(x):\mathbb{R}^n\rightarrow\mathbb{R}$ with $h(x)\in C^2_c$.
If $h(x)$ is a SZBF defined on the set $\mathcal{D}$ with $\mathcal{C}\subseteq\mathcal{D}\subset\mathbb{R}^n$, then $\mathcal{C}$ is stochastically invariant wrt. the solution of   \eqref{eq:sde}.
\end{proposition}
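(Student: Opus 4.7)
The plan is to apply It\^o's lemma to the defining function $h$ along $x_t$, exploit the two SZBF conditions to collapse the resulting semimartingale decomposition into a pathwise ODE inequality on the scalar process $h(x_t)$, and then invoke a one-sided deterministic comparison argument.

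First, since $h\in C^2_c$, It\^o's lemma \eqref{eq:ito-lemma} together with the SDE \eqref{eq:sde} and the generator expression \eqref{eq:gen1} yields the semimartingale representation
\begin{equation*}
h(x_t) = h(x_0) + \int_0^t \mathcal{L}h(x_s)\,ds + \sum_{k=1}^m \int_0^t \frac{\partial h(x_s)}{\partial x}\sigma_k(x_s)\,dw_s^k.
\end{equation*}
By SZBF condition (ii) in \eqref{eq:szbf-cond}, each integrand in the stochastic integrals vanishes whenever $x_s\in\mathcal{D}$, so on this set $Y_t := h(x_t)$ has (almost surely) absolutely continuous sample paths whose Lebesgue derivative equals $\mathcal{L}h(x_t)$. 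Thus the noise is tangential to the level sets of $h$, and the evolution of $Y_t$ is pathwise deterministic on $\mathcal{D}$.

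Second, using SZBF condition (i), I would show that $Y_t$ cannot become negative while $x_t$ stays in $\mathcal{D}$ provided $Y_0\geq 0$. Suppose for contradiction that along some sample path (in a set of positive probability) we had $Y_{t_1}<0$ for some $t_1>0$, with $x_s\in\mathcal{D}$ for all $s\in[0,t_1]$. Let $t_0 := \sup\{t\in[0,t_1] : Y_t\geq 0\}$; by continuity of the paths, $Y_{t_0}=0$ and $Y_t<0$ on $(t_0,t_1]$. On this interval, condition (i) combined with the sign property of $\alpha\in\mathcal{K}_e$ (strictly increasing with $\alpha(0)=0$, hence $\alpha(u)<0$ for $u<0$) gives $\dot Y_t \geq -\alpha(Y_t) > 0$. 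Integrating yields $Y_{t_1}>Y_{t_0}=0$, contradicting $Y_{t_1}<0$.

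Finally, I would globalize via a stopping time. Let $\tau := \inf\{t\geq 0 : x_t \notin \mathcal{D}\}$ and apply the preceding pathwise argument on $[0,t\wedge\tau]$ for each $t\geq 0$. Since $x_0\in\mathcal{C}$ a.s., we get $h(x_{t\wedge\tau})\geq 0$ a.s., so $x_{t\wedge\tau}\in\mathcal{C}\subseteq\mathcal{D}$, which in turn forces $\tau=+\infty$ almost surely. Invoking Definition \ref{def:invset}, this proves $\mathcal{C}$ is stochastically invariant. The main obstacle I foresee is that $\alpha\in\mathcal{K}_e$ need not be Lipschitz near the origin, so standard comparison theorems relying on uniqueness for $\dot z = -\alpha(z)$ are unavailable; the one-sided contradiction above deliberately avoids this issue. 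Care is also needed to ensure the ODE inequality for $Y_t$ holds on a set of full probability (not merely in expectation), which is precisely what SZBF condition (ii) buys us by killing the martingale part of the It\^o decomposition.
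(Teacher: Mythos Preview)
Your reduction to the pathwise inequality $dh(x_t)\geq -\alpha(h(x_t))\,dt$ via It\^o's lemma and the two SZBF conditions is exactly what the paper does; see \eqref{eq:p1}--\eqref{eq:dh}. Where you diverge is in how invariance is extracted from this inequality. The paper argues only at the boundary: for $x\in\partial\mathcal{C}$ one has $\dot h(x)\geq -\alpha(0)=0$, and then invokes the stochastic Nagumo viability theorem \cite{aubin1995stochastic,da2004invariance,blanchini1999set} to conclude. You instead run a direct one-sided comparison on the scalar process $Y_t=h(x_t)$: assume $Y$ becomes negative, locate the last zero, and derive a contradiction from $\dot Y_t\geq -\alpha(Y_t)>0$ on the negative excursion, finishing with a stopping-time localization to handle $\mathcal{D}$. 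Your route is more self-contained (no black-box viability theorem) and makes explicit the pathwise, almost-sure nature of the conclusion, including the care you take about $\alpha$ being merely $\mathcal{K}_e$ rather than Lipschitz; the paper's route is terser but leans on a nontrivial external result and omits the localization step you supply.

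One small point worth tightening: condition (ii) in \eqref{eq:szbf-cond} is written as the \emph{sum} $\sum_k \tfrac{\partial h}{\partial x}\sigma_k(x)=0$, whereas killing the martingale part $\sum_k \int \tfrac{\partial h}{\partial x}\sigma_k\,dw^k$ pathwise requires each summand to vanish (the $w^k$ are independent). You assert ``each integrand vanishes,'' which is the reading needed for the argument to go through; the paper makes the same implicit leap in passing from \eqref{eq:dh0} to \eqref{eq:dh}, so this is a shared ambiguity in the definition rather than a defect peculiar to your argument.
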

\begin{proof}
By It\'o's lemma, then the function $h(x)$ satisfies
\begin{align}
\label{eq:p1}
dh(x) &=
\frac{\partial h(x)}{\partial x} dx_t + \frac{1}{2}(dx_t)'\frac{\partial^2h(x)}{\partial x^2}(dx_t) \notag\\
&=\frac{\partial h(x)}{\partial x}\left[  b(x_t)\,dt + \sum_{k=1}^m\sigma(x_t)\,dw_t^k \right] \notag\\
&\quad+ \frac{1}{2}\left[  b(x_t)\,dt + \sum_{k=1}^m\sigma(x_t)\,dw_t^k \right]'\frac{\partial^2h(x)}{\partial x^2}\notag\\
&\hspace{50pt}\times \left[  b(x_t)\,dt + \sum_{k=1}^m\sigma(x_t)\,dw_t^k \right]
\end{align}
Expanding the right-hand side of \eqref{eq:p1} and using the convention as stated following \eqref{eq:ito-lemma}, we have that
\begin{align}
\label{eq:p2}
    dh &= \sum_{i=1}^nb^i(x)\frac{\partial h(x)}{\partial x^i} \,dt +  \sum_{k=1}^m\frac{\partial h(x)}{\partial x}\sigma_k(x)\,dw_t^k \notag \\
    &\quad+ \frac{1}{2}\sum_{i,j=1}^n\sum_{k=1}^m\sigma_k^i(x)\sigma_k^j(x)\frac{\partial^2h(x)}{\partial x^i\partial x^j}\,dt  \notag \\
    & = \left(\sum_{i=1}^nb^i(x)\frac{\partial h(x)}{\partial x^i} + \frac{1}{2}\sum_{i,j=1}^n\sum_{k=1}^m\sigma_k^i\sigma_k^j(x)\frac{\partial^2h(x)}{\partial x^i\partial x^j} \right)dt \notag \\
    & \quad + \sum_{k=1}^m\frac{\partial h(x)}{\partial x}\sigma_k(x)\,dw_t^k
\end{align}
which, by the construction in \eqref{eq:gen1}, simplifies to
\begin{equation}
\label{eq:dh0}
dh(x)=\mathcal{L}h(x)\,dt + \sum_{k=1}^m\frac{\partial h(x)}{\partial x}\sigma_k(x)\,dw_t^k.
\end{equation}
Now if $h(x)$ is a SZBF, then it must satisfies condition \eqref{eq:szbf-cond} in Definition \ref{def:szbf}.
This  implies that \eqref{eq:dh0} may be rewritten as
\begin{equation}
\label{eq:dh}
dh(x)\geq -\alpha(h(x))\,dt
\end{equation}
for some $\alpha\in\mathcal{K}_e$. 
Using similar argument as in the proof of \cite[Proposition 1]{ames2016control}, we have $\dot{h}(x)\geq-\alpha(h(x))=0$ for any $x\in\partial\mathcal{C}$. By Nagumo's theorem \cite{aubin1995stochastic, da2004invariance,blanchini1999set}, we conclude that the set $\mathcal{C}$ in \eqref{eq:invset} is stochastically invariant.
\end{proof}

One important aspect of the result stated in Proposition \ref{prop1} is that it allows one to determine the connection between the ZBF (when exists) of the diffusion-free part of \eqref{eq:sde}  and the SZBF of the SDE \eqref{eq:sde}.
Specifically, if the set $\mathcal{C}$ defined by  $h(x)\in C^2_c$ is invariant wrt. the diffusion-free part $dx_t = b(x_t)\,dt$ of \eqref{eq:sde}, the conditions for $\mathcal{C}$ to be stochastically invariant wrt. the SDE \eqref{eq:sde} is stated in Lemma \ref{lem1} below.

\begin{lemma}
\label{lem1}
Assume the defining function $h(x)\in C^2_c$ of the set $\mathcal{C}$ in \eqref{eq:invset}  is a ZBF for the diffusion-free  part $dx_t = b(x_t)\,dt$ of \eqref{eq:sde}.
If the diffusion part of the SDE in \eqref{eq:sde} satisfies:
\begin{enumerate}[(i)]
\item $\frac{1}{2}\sum_{i,j=1}^n\sum_{k=1}^m\sigma_k^i\sigma_k^j(x)\frac{\partial^2h(x)}{\partial x^i\partial x^j}=0$
\item $\sum_{k=1}^m\frac{\partial h(x)}{\partial x}\sigma_k(x)\,dw_t^k=0$
\end{enumerate}
then the following statements are equivalent:
\begin{enumerate}
\item $h(x)$ is a SZBF for the SDE \eqref{eq:sde}
\item $\mathcal{C}$ is invariant wrt. the diffusion-free and the SDE \eqref{eq:sde}
\end{enumerate}
\end{lemma}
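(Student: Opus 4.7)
The plan is to exploit how conditions (i) and (ii) of the lemma collapse the SDE analysis onto that of its diffusion-free subsystem. My starting point is the generator formula \eqref{eq:gen1}: under assumption (i), the second-order Hessian term vanishes identically and one is left with $\mathcal{L}h(x) = \sum_{i=1}^{n} b^i(x)\frac{\partial h(x)}{\partial x^i}$, which is precisely the Lie derivative of $h$ along the drift $b$, namely the time derivative of $h$ along trajectories of the diffusion-free flow $dx_t = b(x_t)\,dt$.

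Building on this reduction, I would first establish statement (1) directly from the standing hypotheses. The ZBF hypothesis for the diffusion-free part supplies some $\alpha \in \mathcal{K}_e$ and a set $\mathcal{D}\supseteq\mathcal{C}$ on which $\sum_{i} b^i(x)\frac{\partial h(x)}{\partial x^i} \geq -\alpha(h(x))$; combined with the reduction of $\mathcal{L}h$, this is exactly condition (i) of Definition \ref{def:szbf}. Lemma assumption (ii), once translated into the pointwise identity $\sum_{k}\frac{\partial h(x)}{\partial x}\sigma_k(x) = 0$, supplies condition (ii) of Definition \ref{def:szbf}. Hence $h$ is a SZBF on $\mathcal{D}$, proving (1). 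Statement (2) then follows by invoking Proposition \ref{prop1}, which promotes the SZBF property to stochastic invariance of $\mathcal{C}$ under the SDE, and pairing this with the deterministic invariance of $\mathcal{C}$ already guaranteed by the ZBF hypothesis. The converse implication (2) $\Rightarrow$ (1) is then immediate, since (1) has been obtained from the lemma's standing hypotheses alone and never required (2) as input.

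The main obstacle I anticipate is the careful passage from the lemma's assumption (ii), phrased as the stochastic differential identity $\sum_{k}\frac{\partial h(x)}{\partial x}\sigma_k(x)\,dw_t^k = 0$, to the pointwise equality demanded by Definition \ref{def:szbf}. The cleanest route is to view $M_t := \sum_{k}\int_0^{t}\frac{\partial h(x_s)}{\partial x}\sigma_k(x_s)\,dw_s^k$ as a continuous local martingale and use the Itô isometry/quadratic-variation identity $\langle M\rangle_t = \int_0^{t}\sum_{k}\bigl|\tfrac{\partial h(x_s)}{\partial x}\sigma_k(x_s)\bigr|^2\,ds$: if $M_t\equiv 0$ almost surely, then $\langle M\rangle_t\equiv 0$, and continuity of $h$, $\sigma_k$ together with the fact that the process visits an open subset of $\mathcal{D}$ forces $\sum_{k}\frac{\partial h(x)}{\partial x}\sigma_k(x) = 0$ pointwise on $\mathcal{D}$, closing the gap to the second SZBF condition.
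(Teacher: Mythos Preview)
Your proposal is correct and follows essentially the same route as the paper: reduce $\mathcal{L}h$ to the drift derivative via hypothesis (i), import the ZBF inequality $\dot h \ge -\alpha(h)$ from \cite{ames2016control}, read off the SZBF conditions, and then invoke Proposition~\ref{prop1} together with the deterministic ZBF invariance to obtain statement (2). Your quadratic-variation argument for passing from the stochastic-differential form of hypothesis (ii) to the pointwise identity in Definition~\ref{def:szbf} is more careful than the paper (which simply identifies the two without comment), but otherwise the structure and ingredients are the same.
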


\begin{proof}
Note that \eqref{eq:p2} is of the form
\begin{multline}
\label{eq:p3}
    dh(x) = \sum_{i=1}^nb^i(x)\frac{\partial h(x)}{\partial x^i}\, dt   \\
     + \frac{1}{2}\sum_{i,j=1}^n\sum_{k=1}^m\sigma_k^i\sigma_k^j(x)\frac{\partial^2h(x)}{\partial x^i\partial x^j} \,dt \\
     +\sum_{k=1}^m\frac{\partial h(x)}{\partial x}\sigma_k(x)\,dw_t^k.
\end{multline}
Since $h(x)$ is a ZBF for the diffusion-free part of \eqref{eq:sde}, we have by \cite[Proposition 1]{ames2016control} that $\dot{h}(x)\geq-\alpha(h(x))$ in which $\alpha\in\mathcal{K}_e$.
Combining this and conditions (i)-(ii) in Lemma \ref{lem1}, we have  $\mathcal{L}h(x)\geq-\alpha(h(x))$ and $\sum_{k=1}^m\frac{\partial h(x)}{\partial x}\sigma_k(x)=0$ which by Definition \ref{def:szbf} implies $h(x)$ is a SZBF for SDE \eqref{eq:sde} as claimed in statement 1) in the lemma.
The statement 2) in the lemma follows from the fact that $h(x)$ is a ZBF of the diffusion-free part of \eqref{eq:sde} (cf. \cite[Proposition 1]{ames2016control}) as well as a SZBF for the SDE  \eqref{eq:sde} (cf. Proposition \ref{prop1}).
\end{proof}

\begin{remark}
Intuitively, Lemma \ref{lem1} states that the invariance property of the differential equation of the form $\dot{x}(t)=b(x)$ may be preserved in its stochastic counterpart of the form \eqref{eq:sde} only if the diffusion part of the SDE \eqref{eq:sde} satisfies conditions (i)-(ii) in the lemma.
\end{remark}


\subsection{SZBF-induced Stochastic Stability}
In this section, we show that the existence of a SZBF for the SDE in \eqref{eq:sde} induces a stochastic Lyapunov function which guarantees the stochastic stability of the corresponding invariance set.
This thus essentially establishes the stochastic counterpart of \cite[Proposition 2]{ames2016control} that was developed for the deterministic systems case.

\begin{proposition}
\label{prop2}
Let $h(x):\mathcal{D}\rightarrow\mathbb{R}$ with $h\in C^2_c$ be defined on an open subset $\mathcal{D}\subseteq\mathbb{R}^n$ such that $\mathcal{C}\subset\mathcal{D}\subseteq\mathbb{R}^n$ where $\mathcal{C}$ is defined as in \eqref{eq:invset}.
If $h(x)$ is a SZBF wrt. the SDE in \eqref{eq:sde}, then the set $\mathcal{C}$ is stochastically stable.
\end{proposition}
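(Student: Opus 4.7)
The plan is to exploit the structural condition (ii) in Definition~\ref{def:szbf}, which eliminates the martingale part of the It\^o expansion of $h(x_t)$ and thereby reduces the question of stochastic stability to a pathwise deterministic comparison argument, directly paralleling the proof of \cite[Proposition 2]{ames2016control}.

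First, I would start from the expansion \eqref{eq:dh0} already obtained in the proof of Proposition~\ref{prop1} and apply condition (ii) of the SZBF definition to conclude that, almost surely, $dh(x_t) = \mathcal{L}h(x_t)\,dt$; the stochastic integral vanishes identically. Combining this with condition (i) immediately yields the pathwise differential inequality $dh(x_t) \geq -\alpha(h(x_t))\,dt$, valid on the random interval on which $x_t\in\mathcal{D}$. Since this is now an ordinary (random-coefficient) differential inequality along each sample path, the classical comparison lemma applies pathwise and gives $h(x_t)\geq y(t)$ almost surely, where $y$ solves the scalar deterministic initial-value problem $\dot{y}=-\alpha(y)$ with $y(0)=h(x_0)$.

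Next, I would use the fact that $\alpha\in\mathcal{K}_e$ with $\alpha(0)=0$ renders the origin a stable equilibrium of $\dot{y}=-\alpha(y)$, with a class-$\mathcal{KL}$ estimate of the form $|y(t)|\leq \gamma(|h(x_0)|,t)$; in particular $|y(t)|$ can be made uniformly arbitrarily small by taking $|h(x_0)|$ small enough. Translating this back via the pathwise bound $h(x_t)\geq y(t)$ and using continuity of $h\in C^2_c$ to pass between values of $h$ and distance to $\mathcal{C}$ then shows that for every $\epsilon>0$ one can choose $\delta>0$ so that initial conditions within distance $\delta$ of $\mathcal{C}$ keep $x_t$ within distance $\epsilon$ of $\mathcal{C}$ for all $t\geq 0$ almost surely, which is exactly the claimed stochastic (in fact almost-sure pathwise) stability of $\mathcal{C}$.

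The step I expect to be the main obstacle is this last translation: quantitatively relating $|h(x)|$ to the Euclidean distance from $x$ to $\mathcal{C}$. This is trivial under, e.g., a non-vanishing-gradient condition on $\partial\mathcal{C}$, which yields a local class-$\mathcal{K}$ bound of the form $\text{dist}(x,\mathcal{C})\leq\chi(|h(x)|)$, but in full generality it requires a careful combination of continuity of $h$ with compactness of a neighborhood of $\partial\mathcal{C}$. Once that piece is in place, the argument is structurally identical to the deterministic case in \cite{ames2016control}, the entire stochastic character of the problem having been absorbed into condition (ii) of Definition~\ref{def:szbf}.
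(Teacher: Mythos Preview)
Your argument is sound and takes a genuinely different route from the paper. The paper proceeds in the classical stochastic-Lyapunov style: it introduces the candidate function
\[
V_{\mathcal{C}}(x)=\begin{cases}0,& x\in\mathcal{C},\\ -h(x),& x\in\mathcal{D}\setminus\mathcal{C},\end{cases}
\]
verifies $\mathcal{L}V_{\mathcal{C}}(x)\leq\alpha(-V_{\mathcal{C}}(x))\leq 0$ on $\mathcal{D}\setminus\mathcal{C}$, applies It\^o's formula up to a stopping time, takes expectations (so the stochastic integral disappears only in the mean), and then invokes \cite[Lemma~7.4]{khasminskii2011stochastic} to conclude stability \emph{in probability} of $\mathcal{C}$.

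By contrast, you use condition~(ii) of Definition~\ref{def:szbf} at the outset to kill the stochastic integral \emph{pathwise}, so that $t\mapsto h(x_t)$ obeys an ordinary differential inequality along almost every sample path, and then you run the deterministic comparison/Nagumo-type argument of \cite[Proposition~2]{ames2016control} samplewise. This is more elementary (no stopping times, no Dynkin-type averaging, no appeal to Khasminskii) and actually yields the stronger conclusion of almost-sure stability of $\mathcal{C}$, which of course implies the stability in probability stated by the paper. The price is the step you flag yourself: turning the bound on $h(x_t)$ into a bound on $\operatorname{dist}(x_t,\mathcal{C})$ requires either a regularity hypothesis on $\partial\mathcal{C}$ (e.g.\ $\nabla h\neq 0$ there) or a compactness argument; the paper sidesteps this entirely by working with $V_{\mathcal{C}}$ and deferring the geometric content to the cited lemma in \cite{khasminskii2011stochastic}.
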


\begin{proof}
Note that $h(x)$ being a SZBF  on  $\mathcal{D}$ induces a stochastic Lyapunov function $V_{\mathcal{C}}(x):\mathcal{D}\rightarrow\mathbb{R}$ of the form
\begin{align}            
V_{\mathcal{C}}(x)&=
\begin{cases}
0, 	& \text{if $x\in\mathcal{C}$}, \\[1.2ex]
-h(x), 	& \text{if $x\in\mathcal{D}\setminus\mathcal{C}$}.
\end{cases}
\end{align}
With such a choice, one may notice that  $V_{\mathcal{C}}(x)$ is continuous on its domain and $V_{\mathcal{C}}(x)\in C^2_c$ at every point $x\in\mathcal{D}\setminus\mathcal{C}$. 
Furthermore: i) $V_{\mathcal{C}}(x)=0$ for $x\in\mathcal{C}$; ii) $V_{\mathcal{C}}(x)>0$ for $x\in\mathcal{D}\setminus\mathcal{C}$; and iii) for $x\in\mathcal{D}\setminus\mathcal{C}$, then $\mathcal{L}V_{\mathcal{C}}(x)$ satisfies
\begin{align}
\label{eq:LV}
\mathcal{L}V_{\mathcal{C}}(x) &= - \mathcal{L}h(x) \notag\\
&\leq\alpha\circ h(x) = \alpha(-V_{\mathcal{C}}(x))\leq 0,
\end{align}
with $\alpha\in\mathcal{K}_e.$ 

Let us consider a solution $x_t^{0,x_0}:=x_0\in\mathcal{I}_0\subseteq\mathcal{C}$ of \eqref{eq:sde} in which $\mathcal{I}_t$ denotes the zero set of the function $V_{\mathcal{C}}(x_t)$ at time $t$ (i.e. $x_0$ belongs to the zero set of $V_{\mathcal{C}}(x)$ at $t=0$). Assume that $\mathcal{I}_t$ is closed in $\mathcal{C}$ such that $\mathcal{I_0}$ is also closed in $\mathcal{C}$, then by the continuous everywhere property of \eqref{eq:sde} in \eqref{eq:ito-sol}, there exists a time $t_{\mathcal{C}}$ such that $x_{t_{\mathcal{C}}}^{0,x_0}\in\mathcal{C}$ for a certain time interval $[0,\,t_{\mathcal{C}})$ and that $x_{t_{\mathcal{C}}}^{0,x_0}$ is also a strong solution of \eqref{eq:sde}.
In this regard, $t_{\mathcal{C}}>0$ with probability 1 is the first instance when  $x_{t_{\mathcal{C}}}^{0,x_0}$ leaves the set $\mathcal{C}.$
Using \eqref{eq:p2} to compute It\'o formula for $V_{\mathcal{C}}(x)$ on the interval $[0,\,t_{\mathcal{C}})$, we have that
\begin{align*}
V_{\mathcal{C}}(x(t_{\mathcal{C}}\wedge t)) - V_{\mathcal{C}}(x_0) &= \int_0^{t_{\mathcal{C}}\wedge t}\mathcal{L}V_{\mathcal{C}}(x_s)\, ds\notag \\
&{}- \sum_{k=1}^m\int_0^{t_{\mathcal{C}}\wedge t} \frac{\partial V_{\mathcal{C}}(x)}{\partial x}\sigma_k(x)\,dw_s^k\,ds
\end{align*}
Taking the expectation of the above equation gives
\begin{align}
\mathbb{E}\left[V_{\mathcal{C}}(x(t_{\mathcal{C}}\wedge t))\right] - V_{\mathcal{C}}(x_0) &=\mathbb{E}\left[ \int_0^{t_{\mathcal{C}}\wedge t}\mathcal{L}V_{\mathcal{C}}(x_s)\, ds\right] 
\end{align}
which by \eqref{eq:LV} implies that
\begin{equation}
\label{eq:LV2}
\mathbb{E}\left[V_{\mathcal{C}}(x(t_{\mathcal{C}}\wedge t))\right] \leq 0. 
\end{equation}
Combining \eqref{eq:LV2}  with the property that $V_{\mathcal{C}}=0$ for $x\in\mathcal{C}$ and positive  for $x\in\mathcal{D}\setminus\mathcal{C}$, we then have that
\begin{equation}
\label{eq:LV2}
V_{\mathcal{C}}(x(t_{\mathcal{C}}\wedge t))= 0 
\end{equation}
with probability 1 which thus implies that $(t_{\mathcal{C}}\wedge t):=t$ holds with probability 1.
Then by \cite[Lemma 7.4]{khasminskii2011stochastic}, we conclude that the set $\mathcal{C}$ is stable in probability.
\end{proof}


\section{Remark and Discussion}
\label{sec:conclude}
This paper has presented an approach for the safety verification of the solution of Ito's stochastic differential equations using the notion of stochastic zero barrier function.
It is shown that the extension of ZBF based method in deterministic systems can be extended to provide SZBF based method for safety verification of It\'o's stochastic differential equation. \addtolength{\textheight}{-12cm}   



\balance

%
%
\section*{Acknowledgment}
This research was  supported by Ministry of Research and Technology - National Research and Innovation Agency (Kemenristek-BRIN) of the Republic of Indonesia under the Fundamental Research Scheme, 2020.

%



\end{document}